\documentclass[%
twocolumn,
amsmath,amssymb,
aps, physrev,
]{revtex4-2}

\usepackage{graphicx}% Include figure files
\usepackage{dcolumn}% Align table columns on decimal point
\usepackage{bm}% bold math
\usepackage{amsthm}
\usepackage{braket}
\newtheorem{theorem}{Theorem}

\begin{document}
	
	\preprint{APS/123-QED}
	
	\title{\textbf{Purity and bound energy in ancilla-assisted work extraction} 
	}% 

	\author{B. Vigneshwar$^\dagger$}
	\author{Farhaan Khan}
	\author{R. Sankaranarayanan}%
	\affiliation{Department of Physics, National Institute of Technology, Tiruchirappalli, 620015, TamilNadu, India.\\ $^\dagger$ vigneshwarbvb@gmail.com}%Lines break automatically or can be forced with \\

	\date{\today}% It is always \today, today,
	%  but any date may be explicitly specified
	
	\begin{abstract}
		We investigate ancilla-assisted work extraction in quantum batteries from the perspective of bound energy and purity. We show that the bound energy of the reduced system provides a tight upper bound to the daemonic gain and that this bound is saturated for globally pure system--ancilla states. Motivated by this relation, we introduce a purity-based gain that qualitatively predicts the daemonic gain without requiring explicit optimization over measurements. We further introduce a protocol to analyze the role of dissipation and intrinsic interactions on daemonic gain. Under a collective environment, dissipation can dynamically generate and stabilize finite daemonic gain through environment-induced correlations. In interacting systems, level crossings and spectral restructuring strongly modify the attainable gain through their influence on the accessible bound energy. Our results demonstrate that daemonic gain is governed not only by correlations, but also by the spectral structure of the underlying Hamiltonian and information loss captured by bound energy and purity.
	\end{abstract}

	\maketitle
	
	\section{Introduction}
	\label{sec:introduction}
	
	The concept of ergotropy quantifies the maximum amount of
	work that can be extracted from a finite quantum system through cyclic
	unitary operations \cite{allahverdyan2004maximal}. It provides a
	rigorous distinction between passive states, from which no work can be
	extracted, and active states that contain usable energetic
	resources \cite{goold2016role,bera2019thermodynamics}. Ergotropy has thus become
	central to the study of work extraction
	protocols that harness entanglement and coherence in quantum batteries \cite{alicki2013entanglement,acin2018quantum,vinjanampathy2016quantum,campaioli2024colloquium,campaioli2017enhancing,francica2020quantum,rossini2019many,ahmadi2024nonreciprocal,song2024remote,lu2025topological}.
	
	The introduction of Maxwell's daemon within this paradigm gave rise to the idea of gathering information through measurements that make the system more active \cite{maruyama2009colloquium}. Such an enhancement in ergotropy, known as \textit{daemonic gain},
	originates from quantum correlations such as entanglement and
	discord, which enable conditional operations to unlock additional ergotropic contributions \cite{francica2017daemonic}. Recent studies have demonstrated that protocol design with ancilla-assisted networks display higher efficiency in charging power and extraction~\cite{bernards2019daemonic,vsafranek2023work,morrone2023daemonic,chaki2025auxiliary,satriani2024daemonic,kua2026daemonic}. 
	
	In addition to theoretical advancements, recent experiments have demonstrated the promise of enhanced quantum batteries \cite{quach2022superabsorption,joshi2022experimental,maillette2023experimental}. However, environmental noise directly reduces the amount of extractable work and overall efficiency \cite{hu2022optimal,razzoli2025cyclic,gemme2022ibm,ferraro2026opportunities}. This has motivated increasing attention toward understanding the role of dissipation in quantum thermodynamic protocols \cite{zhao2021quantum,barra2019dissipative,tirone2025quantum,tirone2023work,shastri2025dephasing,vigneshwar2026noise}. 
	
	At the same time, the extractable work in interacting quantum systems is strongly influenced by the underlying spectral structure. In particular, energy-level spacing and spectral transitions determine both the storage capacity and the amount of inaccessible energy present in the system \cite{yang2023battery}. States possessing similar correlations can therefore exhibit significantly different thermodynamic behavior depending on the associated energy gaps and level organization \cite{vigneshwar2025nonlocal}. 
	
	Daemonic gain requires an optimization over measurement protocols, which rapidly becomes computationally demanding for larger systems. Moreover, while correlations are known to enhance ancilla-assisted work extraction, the precise thermodynamic role of bound energy and purity remains unclear, particularly in open quantum systems. While dissipation can degrade correlations through irreversible information loss, it can also stabilize extractable work and dynamically generate useful correlations through collective environmental coupling~\cite{shastri2025dephasing,canzio2025single,wang2025global}. Furthermore, interaction-induced level crossings and gap closings can qualitatively alter the amount of bound energy stored in the system, potentially modifying the attainable daemonic advantage. Understanding how these mechanisms jointly influence ancilla-assisted work extraction therefore constitutes an important problem in realistic quantum batteries.
	
	Motivated by these considerations, in this work we reinterpret daemonic gain from a thermodynamic perspective as the conversion of bound energy into extractable work mediated by correlations between the system and the ancilla. Building on this interpretation, we show that the bound energy of the reduced system provides a tight upper bound to the daemonic gain and that this bound is saturated for globally pure system--ancilla states. We further introduce a purity-based predictor of daemonic gain that circumvents the explicit optimization over measurements while successfully capturing the qualitative behavior of the accessible thermodynamic advantage.
	
	We then establish a protocol to investigate ancilla-assisted work extraction under collective environment. We show that dissipation can dynamically generate and stabilize finite daemonic gain even in the absence of initial correlations through environment-induced correlation buildup. At the same time, irreversible information loss limits the efficiency with which bound energy is converted into extractable work, establishing the nontrivial role of dissipation in controlling the operational usefulness of correlations in quantum batteries.
	
	We then introduce intrinsic interactions in the system Hamiltonian and demonstrate that interaction-driven spectral restructuring strongly influences the attainable daemonic gain. In particular, level crossings and gap closings suppress the thermodynamic utility of correlations through the reduction of bound energy, while interaction-induced gap enhancement amplifies the ancilla-assisted advantage. Even in the presence of dissipation, spectral transitions manifest as pronounced signatures in the purity-based gain. Therefore, our work illuminates the role of purity and bound energy in ancilla-assisted protocols by illustrating the effects of information loss and spectral transitions.

	Our work is structured as follows. In Sec.~\ref{sec2}, we establish the relation between daemonic gain, bound energy, and purity. In Sec.~\ref{sec3}, we analyze ancilla-assisted work extraction in non-interacting open quantum batteries. Section~\ref{sec4} investigates the influence of spectral restructuring on the gain in interacting systems. Finally, the key results are summarized in Sec.~\ref{con}.

	\section{Daemonic Gain and Purity}\label{sec2}
	
	Ergotropy quantifies the maximum amount of work extractable from a quantum state through a cyclic unitary process \cite{allahverdyan2004maximal}. It is defined as
	\begin{equation}
		\xi(\rho, H) = E(\rho) - E_p(\rho) 
		= \operatorname{Tr}(\rho H) - \min_{V} \operatorname{Tr}(V \rho V^\dagger H),
		\label{e1}
	\end{equation}
	where $E(\rho)=\operatorname{Tr}(\rho H)$ denotes the average energy of the state $\rho$, and $E_p(\rho)$ is the energy of its passive counterpart, obtained by minimizing over all unitary operations.
	Let $\rho = \sum_n r_n \ket{r_n}\!\bra{r_n}$ and 
	$H = \sum_n e_n \ket{e_n}\!\bra{e_n}$ 
	be the spectral decompositions of the state and the Hamiltonian, respectively. The passive state relative to $H$ is given by
	\begin{equation}
		\rho^p = \sum_n r_n^\downarrow \ket{e_n^\uparrow}\!\bra{e_n^\uparrow},
	\end{equation}
	where $\{r_n^\downarrow\}$ and $\{e_n^\uparrow\}$ denote the eigenvalues of $\rho$ and $H$ arranged in decreasing and increasing order, respectively. Consequently, the passive energy takes the form
	\begin{equation}
		E_p(\rho) = \sum_n r_n^\downarrow e_n^\uparrow .
	\end{equation}
	
	The concept of ancilla-assisted work extraction was introduced in \cite{francica2017daemonic}, where information gained from measurements performed on an ancillary system is exploited to enhance work extraction. For a bipartite system--ancilla state $\rho_{SA}$, the \emph{ancilla-assisted ergotropy} is defined as
	\begin{equation}
		\xi_{\Pi}(\rho_{SA},H)
		= \operatorname{Tr}[\rho_S H]
		- \sum_a p_a \min_U \operatorname{Tr}[U \rho_{S|a} U^\dagger H],
		\label{eqep}
	\end{equation}
	where $\rho_{S|a}=\operatorname{Tr}_A[\pi_a^A \rho_{SA} \pi_a^A]/p_a$ is the conditional state of the system corresponding to measurement outcome $a$ on the ancilla, occurring with probability $p_a=\operatorname{Tr}[\pi_a^A \rho_{SA}]$ and $\rho_S=\operatorname{Tr}_A[\rho_{SA}]$ is the reduced system state. Here $\{\pi_a^A\}$ are rank-one projective measurements on the ancilla.
	The \emph{daemonic gain} is defined as the maximal enhancement in ergotropy due to the measurement protocol:
	\begin{equation}
		\delta W= \max_\Pi \xi_{\Pi}(\rho_{SA},H)-\xi(\rho_S,H),
		\label{eq1}
	\end{equation}
	where $\xi_{\Pi}$ denotes the ancilla-assisted ergotropy after projective measurement.
	
	Another relevant thermodynamic quantity is the \emph{bound energy}, defined as the difference between the passive energy and the ground state energy $\epsilon_0$ of the Hamiltonian \cite{bera2019thermodynamics}:
	\begin{equation}
		E_b=\min_{U}\operatorname{Tr}[U\rho_S U^\dagger H]- \epsilon_0.
		\label{eq2}
	\end{equation}
	Bound energy quantifies the portion of energy that cannot be extracted as work via unitary operations. Notably, pure states possess zero bound energy, since an appropriate unitary can always transform them to the ground state. Hence, bound energy is intrinsically connected to the level of mixedness or impurity of the quantum state \cite{mula2023ergotropy}.
	For a composite state $\rho_{SA}$, the purity of the reduced system state $\rho_S$ reflects the degree of bound energy in the system. Daemonic gain arises from correlation between system and ancilla, which also affects the purity of the reduced states. Hence, it is natural to anticipate a relationship between $\delta W$ and $E_b$.
	
	From this perspective, ancilla-assisted work extraction may be viewed as a process that converts bound energy into free energy at the expense of correlations. Consequently, $E_b$ constitutes a natural upper bound for the daemonic gain. 
	This can be shown directly from Eqs.~\eqref{eq1} and \eqref{eq2} as
	\begin{equation}
		E_b - \delta W
		=
		\min_\Pi \sum_a p_a \min_U \operatorname{Tr}[U \rho_{S|a} U^\dagger H]
		- \epsilon_0 .
	\end{equation}
	Since $\epsilon_0$ is the lowest eigenvalue of $H$, one has
	\begin{equation}
		\min_U \operatorname{Tr}[U \rho_{S|a} U^\dagger H] \ge \epsilon_0
	\end{equation}
	for every conditional state $\rho_{S|a}$. Therefore,
	\begin{equation}
		E_b - \delta W \ge 0,
	\end{equation}
	which establishes $\delta W \le E_b$. Whether this bound is saturated depends on the structure of the initial system--ancilla state. The following theorem establishes a direct connection between bound saturation and global purity.
	
	\begin{figure*}[ht]
		\centering
		\includegraphics[width=1\linewidth]{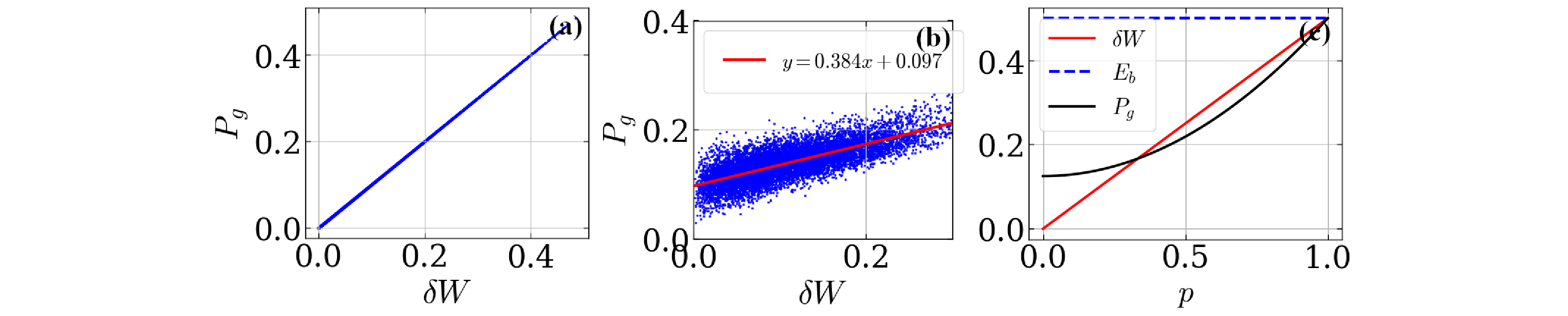}
		\caption{Scatter plot of $P_g$ and $\delta W$ for $10^4$ randomly generated (a) pure states and (b) mixed states. (c) $P_g$, $E_b$ and $\delta W$ as a function of parameter $p$ for Werner states.}
		\label{f1}
	\end{figure*}
	
	\begin{theorem}
		The upper bound of the daemonic gain, $E_b$, is saturated whenever the system--ancilla state $\rho_{SA}$ is pure, i.e., $\rho_{SA}=\ket{\psi_{SA}}\bra{\psi_{SA}}$.
	\end{theorem}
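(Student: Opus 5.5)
From the identity derived above,
\[
E_b-\delta W=\min_\Pi \sum_a p_a \min_U \operatorname{Tr}[U\rho_{S|a}U^\dagger H]-\epsilon_0 ,
\]
saturation $\delta W=E_b$ holds as soon as we exhibit one rank-one projective measurement $\Pi=\{\pi_a^A\}$ for which every conditional state $\rho_{S|a}$ has passive energy exactly $\epsilon_0$. Since each term in the average is bounded below by $\epsilon_0$, this measurement attains the minimum. The passive energy of a state equals $\epsilon_0$ precisely when the state is pure: its passive counterpart then places all weight on the ground level. So the plan reduces to a single claim: for pure $\rho_{SA}$ there is a rank-one projective measurement on the ancilla whose outcomes all leave the system in a pure state.

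To construct it, write the Schmidt decomposition $\ket{\psi_{SA}}=\sum_k \sqrt{\lambda_k}\ket{s_k}\ket{a_k}$, with $\{\ket{a_k}\}$ orthonormal in the ancilla space. Complete $\{\ket{a_k}\}$ to an orthonormal basis of the ancilla Hilbert space and take $\pi_a^A=\ket{a}\bra{a}$ in this basis. For outcome $a=k$ with $\lambda_k>0$, the unnormalized post-measurement system state is
\[
\operatorname{Tr}_A[\pi_k^A\ket{\psi_{SA}}\bra{\psi_{SA}}\pi_k^A]=\lambda_k\ket{s_k}\bra{s_k}.
\]
Hence $p_k=\lambda_k$ and $\rho_{S|k}=\ket{s_k}\bra{s_k}$. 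Outcomes with $p_a=0$ do not contribute to the average.

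The same conclusion does not depend on the Schmidt basis. For any rank-one projector $\pi_a^A=\ket{a}\bra{a}$, we have $(\mathbb{1}\otimes\bra{a})\ket{\psi_{SA}}$ equal to an unnormalized vector in the system space. So every outcome of any rank-one measurement on a pure global state yields a pure conditional state.

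For each pure $\rho_{S|k}$, choose $U_k$ mapping $\ket{s_k}$ to the ground state $\ket{e_0^\uparrow}$ of $H$; this gives $\min_U\operatorname{Tr}[U\rho_{S|k}U^\dagger H]=\epsilon_0$. Summing with weights $p_k$ summing to one gives $\sum_a p_a\cdot\epsilon_0=\epsilon_0$. Therefore $E_b-\delta W\le 0$, and combined with the bound $\delta W\le E_b$ already established, $\delta W=E_b$.

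I expect no real obstacle. The only delicate points are two. First, the ancilla dimension may exceed the Schmidt rank, and zero-probability outcomes must be handled; both are covered by completing the basis and discarding outcomes with $p_a=0$. Second, one must confirm that the definition of $E_b$ uses the same ground energy $\epsilon_0$ of the same $H$ appearing in $\xi_\Pi$, which is true by Eqs.~\eqref{eqep} and \eqref{eq2}.
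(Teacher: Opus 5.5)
Your proof is correct and takes essentially the same route as the paper. A pure global state forces every conditional state $\rho_{S|a}$ produced by a rank-one ancilla projector to be pure, so each conditional passive energy equals $\epsilon_0$ and the identity for $E_b-\delta W$ gives saturation. Your observation that $(\mathbb{I}\otimes\bra{a})\ket{\psi_{SA}}$ is a single system vector is a shorter replacement for the paper's explicit Schmidt-coefficient computation of $\operatorname{Tr}[\rho_{S|a}^2]=1$.

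One small point: passive energy $\epsilon_0$ does not characterize pure states when the ground level of $H$ is degenerate, since a mixed state supported in the ground eigenspace also has passive energy $\epsilon_0$. You only use the direction pure $\Rightarrow$ passive energy $\epsilon_0$, so the argument is unaffected.
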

	
	\begin{proof}
		It suffices to show that if $\rho_{SA}$ is pure, then the conditional state $\rho_{S|a}$ is also pure for every projective measurement outcome $a$. In this case, the minimization in Eq.~\eqref{eqep} allows the passive state to reach the ground state, implying vanishing bound energy.
		
		Let the bipartite pure state admit the Schmidt decomposition
		\begin{equation}
			\ket{\psi_{SA}}=\sum_{i} \sqrt{\lambda_i}\,
			\ket{\alpha_i}\ket{\beta_i},
		\end{equation}
		with $\sum_i \lambda_i=1$. Acting with the projector $\mathbb{I}\otimes \pi_a^A$ and tracing out the ancilla yields
		\begin{equation}
			\rho_{S|a}
			=\frac{1}{p_a}
			\sum_{ij}\sqrt{\lambda_i\lambda_j}
			\bra{\beta_i}\pi_a^A\ket{\beta_j}\ket{\alpha_j}\bra{\alpha_i},
		\end{equation}
		where $p_a=\sum_i \lambda_i \bra{\beta_i}\pi_a^A\ket{\beta_i}$.
		
		Taking $\pi_a^A=\ket{a}\bra{a}$ and defining 
		$\bra{\beta_i}\pi_a^A\ket{\beta_j}
		=\langle\beta_i|a \rangle\langle a|\beta_j\rangle
		=a_i^* a_j$,
		the purity of the conditional state becomes
		\begin{equation}
			\operatorname{Tr}[\rho_{S|a}^2]
			=\frac{\sum_{ij}\lambda_i\lambda_j |a_i|^2 |a_j|^2}{p_a^2}
			=\frac{p_a^2}{p_a^2}=1,
		\end{equation}
		which confirms that $\rho_{S|a}$ is pure.
	\end{proof}
	
	The above result demonstrates that the purity of the global state plays a central role in daemonic work extraction. In general, the daemonic gain depends on three key factors: the bound energy stored in the reduced system state, the correlations shared between the system and ancilla, and the purity of the global system--ancilla state. While bound energy quantifies the amount of inaccessible energy stored in the reduced state, the correlations determine how efficiently this energy can be converted into extractable work through measurements on the ancilla.
	
	Motivated by these observations, we introduce the quantity \emph{purity-based gain} defined as
	\begin{equation}
		P_g=\operatorname{Tr}[\rho_{SA}^2] E_b,
		\label{eq3}
	\end{equation}
	as a qualitative predictor of the daemonic gain. The motivation behind this construction is twofold. First, the explicit optimization over measurement protocols in Eq.~\eqref{eq1} becomes computationally demanding for larger systems. Second, the theorem above indicates that global purity directly controls the efficiency with which bound energy can be converted into ergotropy. Consequently, $P_g$ combines the available bound energy with the degree of global purity to estimate the accessible daemonic advantage without requiring measurement optimization.
	
	It is evident from Eq.~\eqref{eq3} that $P_g$ is invariant under local unitary operations and takes values in the range $(0,E_b)$. To illustrate the relationship between $\delta W$ and $P_g$, we generate $10^4$ random two-qubit states, identifying one qubit as the system and the other as the ancilla. The system Hamiltonian is chosen as $H=\sigma_z/2$. 
	For the evaluation of $\delta W$, projective measurements of the form
	\begin{align}
		P_0 &= \cos k \ket{0} + \sin k \ket{1}, \nonumber \\
		P_1 &= -\sin k \ket{0} + \cos k \ket{1},
		\label{peq}
	\end{align}
	are considered, with optimization over $k\in(0,\pi/2)$. In contrast, the evaluation of $P_g$ is straightforward and does not require measurement optimization.
	
	For pure system--ancilla states, the scatter plot of $P_g$ and $\delta W$ collapses to a straight line, as illustrated in Fig.~\ref{f1}(a). In this case, $P_g$ becomes identical to the bound energy $E_b$. Since globally pure states saturate the upper bound established in Theorem~1, one obtains $\delta W=E_b$, leading to an exact correspondence between $P_g$ and $\delta W$. 
	For mixed states, shown in Fig.~\ref{f1}(b), the relationship exhibits a broader spread. Here the global purity is reduced and the upper bound is no longer saturated, resulting in quantitative deviations between $P_g$ and $\delta W$. Nevertheless, the overall trend of the daemonic gain is still qualitatively captured by $P_g$, as evidenced by the fitted red line in Fig.~\ref{f1}(b). The spread reflects the fact that, for mixed states, correlations and bound energy are not converted into extractable work with unit efficiency.
	
	The role of purity becomes even more transparent by considering the Werner state
	\begin{equation}
		\rho(p)
		=
		p\,|\Phi^{+}\rangle\!\langle\Phi^{+}|
		+
		(1-p)\frac{\mathbb{I}_4}{4},
	\end{equation}
	where
	\begin{equation}
		|\Phi^{+}\rangle
		=
		\frac{1}{\sqrt{2}}
		\left(
		|00\rangle+|11\rangle
		\right),
	\end{equation}
	and $\mathbb{I}_4$ denotes the identity operator in the two-qubit Hilbert space. The reduced state of the system remains maximally mixed for all values of $p$, implying a constant bound energy $E_b=0.5$ for $H=\sigma_z/2$. However, the daemonic gain increases monotonically with $p$, as shown in Fig.~\ref{f1}(c). At $p=0$, the state is maximally mixed and uncorrelated, yielding vanishing daemonic gain. Increasing $p$ enhances the contribution of the Bell state component, thereby amplifying the accessible correlations and increasing the gain. Although the bound energy remains constant throughout, the qualitative behavior of $\delta W$ is accurately reflected by $P_g$. This demonstrates that bound energy alone is insufficient to characterize the daemonic advantage and that the global purity plays an essential operational role.
	
	Taken together, these observations clarify the operational significance of the purity-based gain. For globally pure system--ancilla states, $P_g$ becomes exact because the bound-energy upper limit is saturated. For mixed states, $P_g$ remains a reliable qualitative indicator of the daemonic gain, capturing how efficiently correlations convert bound energy into extractable work. 
	Physically, the bound energy of the reduced system originates from information loss induced by correlations with external degrees of freedom. Measurements performed on the ancilla can partially recover this information and thereby convert a fraction of the bound energy into ergotropy. However, when the combined system--ancilla state itself becomes mixed, additional information is irreversibly lost to inaccessible external degrees of freedom, such as the environment. This inaccessible loss suppresses the attainable daemonic gain even in the presence of finite bound energy. The quantity $P_g$ captures this mechanism by incorporating the global purity of the system--ancilla state, thereby providing a compact operational estimate of the accessible thermodynamic advantage.
	
	\begin{figure}[ht]
		\centering
		\includegraphics[width=1\linewidth]{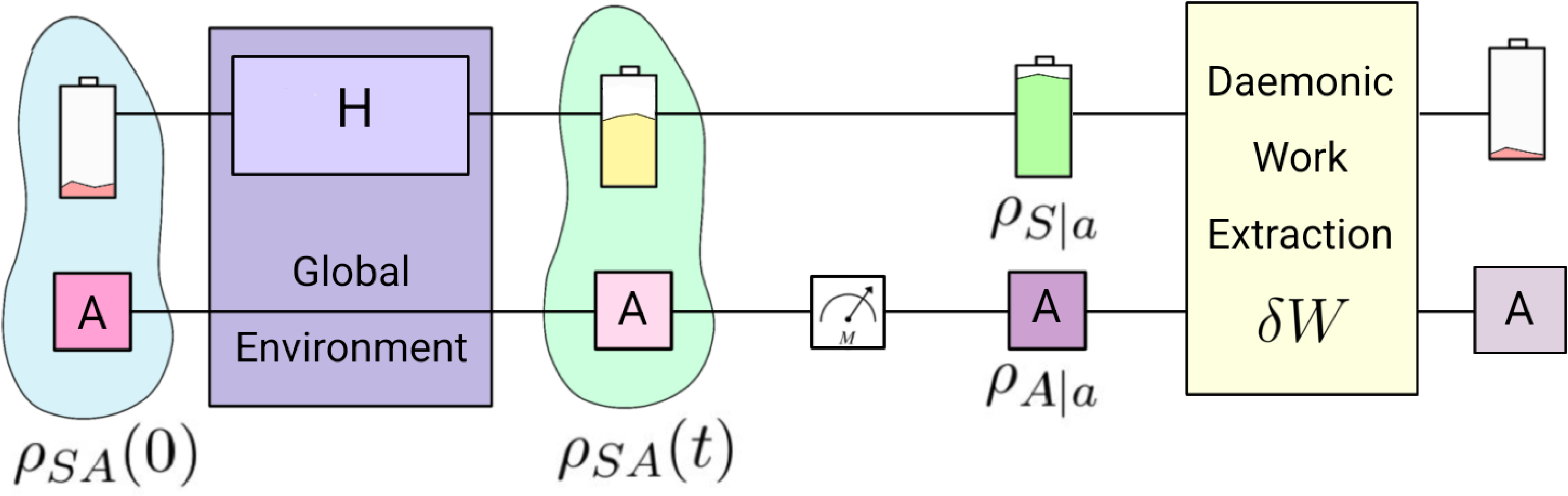}
		\caption{Schematic diagram of the protocol. The system and ancilla are initially correlated with the system occupying lower energies followed by charging under global environment. The ancilla is then measured to implement daemonic work extraction.}
		\label{f5}
	\end{figure}

	\section{Ancilla-assisted open system}\label{sec3}
	
	In the previous section, we established that daemonic gain is intrinsically connected to both the bound energy of the reduced system and the purity of the global system--ancilla state. Since purity is highly susceptible to environmental noise, it is essential to investigate how dissipation influences ancilla-assisted work extraction. In realistic quantum batteries, unavoidable coupling to the environment induces entropy production, suppresses coherence, and modifies system--ancilla correlations. Understanding the interplay between noise, correlations, and bound energy is therefore crucial for assessing the operational viability of daemonic protocols.
	
	\subsection{Protocol}
	To analyze environmental effects in ancilla-assisted protocol, we consider a minimal setting in which the battery is modeled by the system Hamiltonian $H_S$, while the ancilla acts solely as a resource for measurement-assisted work extraction. The charging field $H_C$ acts directly and exclusively on the system via a standard direct charging protocol \cite{campaioli2024colloquium,ghosh2020enhancement,le2018spin}, thereby eliminating coherent system--ancilla interactions during the charging stage. Although the ancilla may alternatively be viewed as an auxiliary charger \cite{satriani2024daemonic}, where cyclic unitary charging can close the daemonic gap~\cite{pushpan2025DaemonicGap}, such gap closing is generally inhibited in the presence of dissipation due to the generation of bound energy by environmental coupling. The present protocol therefore enables a transparent identification of the role of dissipation in ancilla-assisted work extraction.
	
	\begin{figure*}[ht]
		\centering
		\includegraphics[width=1\linewidth]{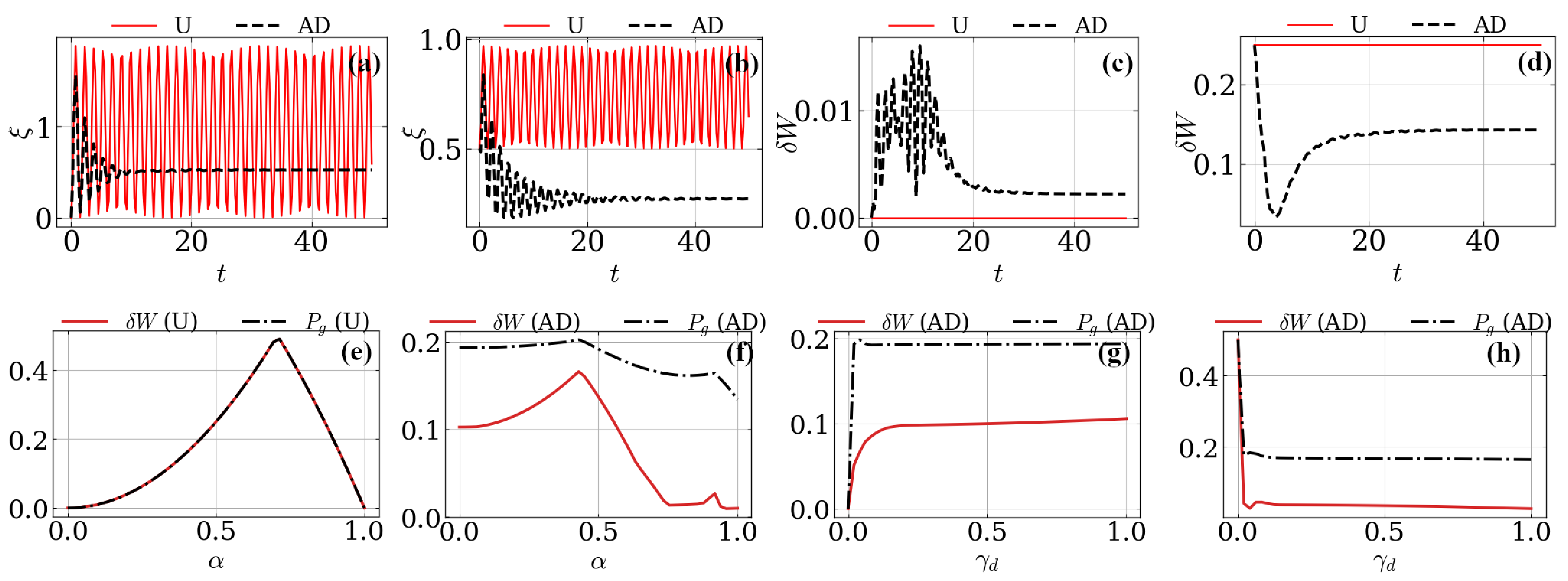}
		\caption{Ergotropy of the two-qubit battery under unitary evolution and collective amplitude damping for (a) $\alpha=1$ and (b) $\alpha=0.5$. The corresponding daemonic gain is shown for (c) $\alpha=1$ and (d) $\alpha=0.5$. Comparison between the daemonic gain $\delta W$ and the purity-based gain $P_g$ at $t=100$ as a function of the initial-state parameter $\alpha$ for (e) unitary evolution and (f) amplitude damping for $\gamma_d=0.3$. Gain as a function the dissipation strength $\gamma_d$ for (g) $\alpha=0$ and (h) $\alpha=1/\sqrt{2}$, respectively. Other parameters are $\omega_c=2$ and $h=1$.}
		\label{f2}
	\end{figure*}
	
	We initialize the system--ancilla state in the correlated form
	\begin{equation}
		\ket{\psi_{SA}(0)}
		=
		\alpha \ket{\psi_0}\ket{0}
		+
		\beta \ket{\psi_1}\ket{1},
	\end{equation}
	where $\ket{\psi_0}$ and $\ket{\psi_1}$ denote the ground and first excited eigenstates of the system Hamiltonian $H_S$, $\beta=\sqrt{1-\alpha^2}$, and $\rho_{SA}(0)=\ket{\psi_{SA}(0)}\bra{\psi_{SA}(0)}$. The parameter $\alpha$ controls both the initial population imbalance and the degree of system--ancilla correlation.
	
	The subsequent evolution is governed by the Lindblad master equation
	\begin{equation}
		\dot{\rho}_{SA}(t)
		=
		-i[H_S+H_C,\rho_{SA}(t)]
		+
		\gamma_d D[O](\rho_{SA}(t)),
		\label{eq5}
	\end{equation}
	where $H_C$ denotes the charging Hamiltonian and 
	\begin{equation}
		D[O](\rho)
		=
		O\rho O^\dagger
		-\frac{1}{2}(O^\dagger O \rho+\rho O^\dagger O)
	\end{equation}
	describes the dissipative contribution with rate $\gamma_d$. After the evolution, ergotropy and daemonic gain are computed with respect to the system Hamiltonian $H_S$. The schematic diagram of the protocol is shown in Fig.~\ref{f5}.
	
	We first examine two non-interacting system qubits coupled to a single ancilla qubit. The system and charging Hamiltonians are given by
	\begin{equation}
		H_S=\frac{h}{2}(\sigma_z^{(1)}+\sigma_z^{(2)}),
		\qquad
		H_C=\omega_c(\sigma_x^{(1)}+\sigma_x^{(2)}).
	\end{equation}
	For the dissipation, consider the collective amplitude damping (AD), modeled through the global Lindblad operator
	\begin{equation}
		O_{ad}=\sum_{i=1}^{3}\sigma_-^{(i)},
		\label{eqlb}
	\end{equation}
	where $\sigma_-^{(i)}=(\sigma_x^{(i)}-i\sigma_y^{(i)})/2$, and $i=1,2$ label the system qubits while $i=3$ labels the ancilla. The unitary case corresponds to $\gamma_d=0$.
	Such collective dissipators naturally arise when multiple qubits couple to a common reservoir, as in cavity QED systems, superconducting circuits, or spin ensembles interacting with a shared bosonic bath. Under the Born--Markov approximation, a common-bath interaction of the form $H_{\mathrm{int}} = \sum_i s_i \otimes B$ leads to effective Lindblad operators proportional to $\sum_i s_i$~\cite{wang2015dissipation,gelhausen2018dissipative}. Importantly, the resulting dissipator contains cross terms of the form $s_i \rho s_j$ ($i\neq j$), which mediate environment-induced correlations between otherwise non-interacting subsystems~\cite{wang2025global}. Consequently, collective dissipation provides a natural mechanism for studying noise-assisted daemonic gain.

	For unitary dynamics ($\gamma_d=0$), analytical expressions for the ergotropy and daemonic gain are provided in Appendix~\ref{appa}. The dissipative dynamics are obtained numerically by solving Eq.~\eqref{eq5} using the \textsc{QuTiP} package~\cite{johansson2012qutip}. In addition, measurement optimization introduced in \eqref{peq} is performed to calculate $\delta W$ numerically. The resulting ergotropy, daemonic gain, and purity-based gain are shown in Fig.~\ref{f2}.
	For $\alpha=1$, corresponding to an initially uncorrelated state, the ergotropy under unitary evolution exhibits persistent oscillations [see Fig.~\ref{f2}(a)], reflecting coherent energy exchange with the charging field. The oscillation frequency is determined by the interplay between the charging field strength $\omega_c$ and the intrinsic level spacing $h$, while the maximum attainable ergotropy depends additionally on the initial correlations through $\alpha$ [see Appendix~\ref{appa}] for the unitary evolution. In the presence of collective amplitude damping ($\gamma_d=0.3$), the oscillations are rapidly suppressed and the ergotropy approaches a stable non-zero value. Although dissipation reduces the maximal extractable work, it simultaneously stabilizes the output, consistent with earlier observations of noise-assisted stabilization in quantum batteries \cite{farina2019charger,gherardini2020stabilizing}.
	
	For $\alpha=0.5$, the initial correlations induce non-zero ergotropy already at $t=0$, as shown in Fig.~\ref{f2}(b). Under unitary dynamics, the ergotropy never vanishes during the evolution, indicating that initial system--ancilla correlations preserve a finite amount of extractable energy throughout the charging cycle. Under dissipative evolution, the oscillations are again damped and the system relaxes toward a steady ergotropic output lower than the initial value due to environmental decay.
	The corresponding daemonic gain is shown in Figs.~\ref{f2}(c) and \ref{f2}(d). For $\alpha=1$, the unitary evolution produces vanishing gain because the initial state is uncorrelated. In contrast, collective amplitude damping generates a finite daemonic gain despite the absence of initial correlations, as illustrated in Fig.~\ref{f2}(c). The gain initially exhibits temporal fluctuations before converging to a stable non-zero value. This behavior demonstrates that the common environment dynamically generates effective system--ancilla correlations, thereby activating noise-induced daemonic advantage.
	
	For initially correlated states ($\alpha=0.5$), the unitary daemonic gain remains constant throughout the evolution [see Fig.~\ref{f2}(d)], reflecting the preservation of correlations under coherent dynamics. Under collective dissipation, the gain becomes time dependent due to the competition between correlation generation and irreversible information loss to the environment. Nevertheless, unlike local dissipation, the collective channel preserves a finite amount of system--ancilla correlation and consequently stabilizes the daemonic gain at long times.
	
	Figure~\ref{f2}(e) shows the dependence of $\delta W$ and $P_g$ on the initial-state parameter $\alpha$ for unitary evolution. The gain vanishes at $\alpha=0$ and $\alpha=1$, corresponding to product initial states, and reaches its maximum at $\alpha=1/\sqrt{2}$, where the initial entanglement is maximal. Since globally pure states convert bound energy into ancilla-assisted work with unit efficiency, the purity-based gain coincides exactly with the daemonic gain in the unitary case.
	Under collective amplitude damping, shown in Fig.~\ref{f2}(f), the maximal gain shifts away from the maximally entangled configuration toward states with larger excited-state population ($\alpha<0.5$). In this regime, dissipative relaxation redistributes population while the collective jump operator simultaneously generates and preserves correlations, leading to enhanced and stabilized daemonic gain. Conversely, when the ground-state population dominates ($\alpha\approx1$), the environment suppresses correlations more rapidly than bound energy can be converted into extractable work, resulting in diminished gain.
	
	The qualitative agreement between $\delta W$ and $P_g$ persists even in the presence of noise. As shown in Fig.~\ref{f2}(g), both quantities increase from zero when the dissipation strength $\gamma_d$ is switched on for the initially uncorrelated state ($\alpha=0$), signaling environment-induced correlation generation. In contrast, for the maximally correlated initial state, increasing $\gamma_d$ suppresses the gain due to irreversible leakage of system--ancilla information into the larger environment  [see Fig.~\ref{f2}(h)]. Although collective dissipation mediates interactions between the subsystems, the attainable gain remains strongly constrained by the balance between initial correlations and dissipative information loss.
	
	Overall, these results demonstrate that the daemonic gain in open quantum systems is governed by a nontrivial interplay between dissipation, correlations, and bound energy. Collective dissipation does not merely degrade the thermodynamic advantage; rather, it can actively generate and stabilize useful correlations that convert bound energy into extractable work. At the same time, irreversible information loss to the environment limits the efficiency of this conversion process. The close qualitative agreement between $\delta W$ and $P_g$ across both coherent and dissipative regimes further establishes the purity-based gain as an effective indicator of the accessible daemonic advantage without requiring explicit measurement optimization.

	\section{Role of level spacing}\label{sec4}
	
	In the previous section, we examined how environmental dissipation and initial-state correlations influence the daemonic gain in non-interacting quantum batteries. Since the bound energy depends explicitly on the underlying spectrum, intrinsic interactions are expected to qualitatively modify the attainable daemonic advantage through changes in the energy-level structure. In interacting systems, tuning the coupling parameters can induce level crossings and spectral rearrangements, thereby altering the bound energy. In particular, degeneracies at level crossings can suppress the available bound energy, while interaction-induced gap enhancement can increase the thermodynamic utility of correlations.
	
	To investigate these effects, we consider an interacting two-qubit system described by an anisotropic Heisenberg $XYZ$ model supplemented with Dzyaloshinsky--Moriya interaction (DMI). Originating from spin--orbit coupling in systems lacking inversion symmetry \cite{dzyaloshinsky1958thermodynamic,moriya1960new}, the DMI introduces an antisymmetric exchange interaction that modifies both the eigenvalue spectrum and the structure of eigenstates \cite{vigneshwar2026noise}. Such interaction-induced spectral restructuring provides a natural platform to analyze how level spacing and level crossings influence ancilla-assisted work extraction.
	
	The system Hamiltonian is given by
	\begin{align}
		\hat{H}_S =\, & \frac{h}{2} \left( \hat{\sigma}_z^{(1)} + \hat{\sigma}_z^{(2)} \right) \nonumber \\
		& +\frac{J}{2} \left[ (1 + \gamma) \hat{\sigma}_x^{(1)} \hat{\sigma}_x^{(2)} 
		+ (1 - \gamma) \hat{\sigma}_y^{(1)} \hat{\sigma}_y^{(2)} \right] \nonumber \\
		& + \frac{J_z}{2} \hat{\sigma}_z^{(1)} \hat{\sigma}_z^{(2)} 
		+ \frac{D}{2} \left( \hat{\sigma}_x^{(1)} \hat{\sigma}_y^{(2)} 
		- \hat{\sigma}_y^{(1)} \hat{\sigma}_x^{(2)} \right),
		\label{2e1}
	\end{align}
	where $\hat{\sigma}_i^{(k)}$ ($i=x,y,z$) denotes the Pauli operator acting on the $k$th qubit. The parameters $J$ and $J_z$ quantify symmetric exchange interactions, $\gamma$ controls the anisotropy in the $xy$ plane, and $D$ characterizes the strength of the antisymmetric Dzyaloshinsky--Moriya interaction (DMI), which induces spin canting and enhances chiral correlations.
	
	The eigenstructure of $\hat{H}_S$ plays a central role in determining both ergotropy and bound energy. Diagonalizing Eq.~\eqref{2e1}, the eigenvalues and eigenvectors are obtained as
	\begin{subequations}
		\begin{align}
			e_{1,2} &= \pm \sqrt{J^2 + D^2} - \frac{J_z}{2}, \\
			|e_{1,2}\rangle &= 
			\frac{1}{\sqrt{|c_{1,2}|^2 + 1}}
			\left( 0,\, c_{1,2},\, 1,\, 0 \right)^\intercal, \\
			e_{3,4} &= \pm \sqrt{h^2 + J^2 \gamma^2} + \frac{J_z}{2}, \\
			|e_{3,4}\rangle &= 
			\frac{1}{\sqrt{|c_{3,4}|^2 + 1}}
			\left( c_{3,4},\, 0,\, 0,\, 1 \right)^\intercal,
		\end{align}
		\label{11e}
	\end{subequations}
	with coefficients
	\begin{align}
		c_{1,2} &= \pm \frac{J + i D}{\sqrt{J^2 + D^2}},
		\qquad
		c_{3,4} = \frac{h \pm \sqrt{h^2 + J^2 \gamma^2}}{J \gamma}.
	\end{align}
	
	\begin{figure}[ht]
		\centering
		\includegraphics[width=1\linewidth]{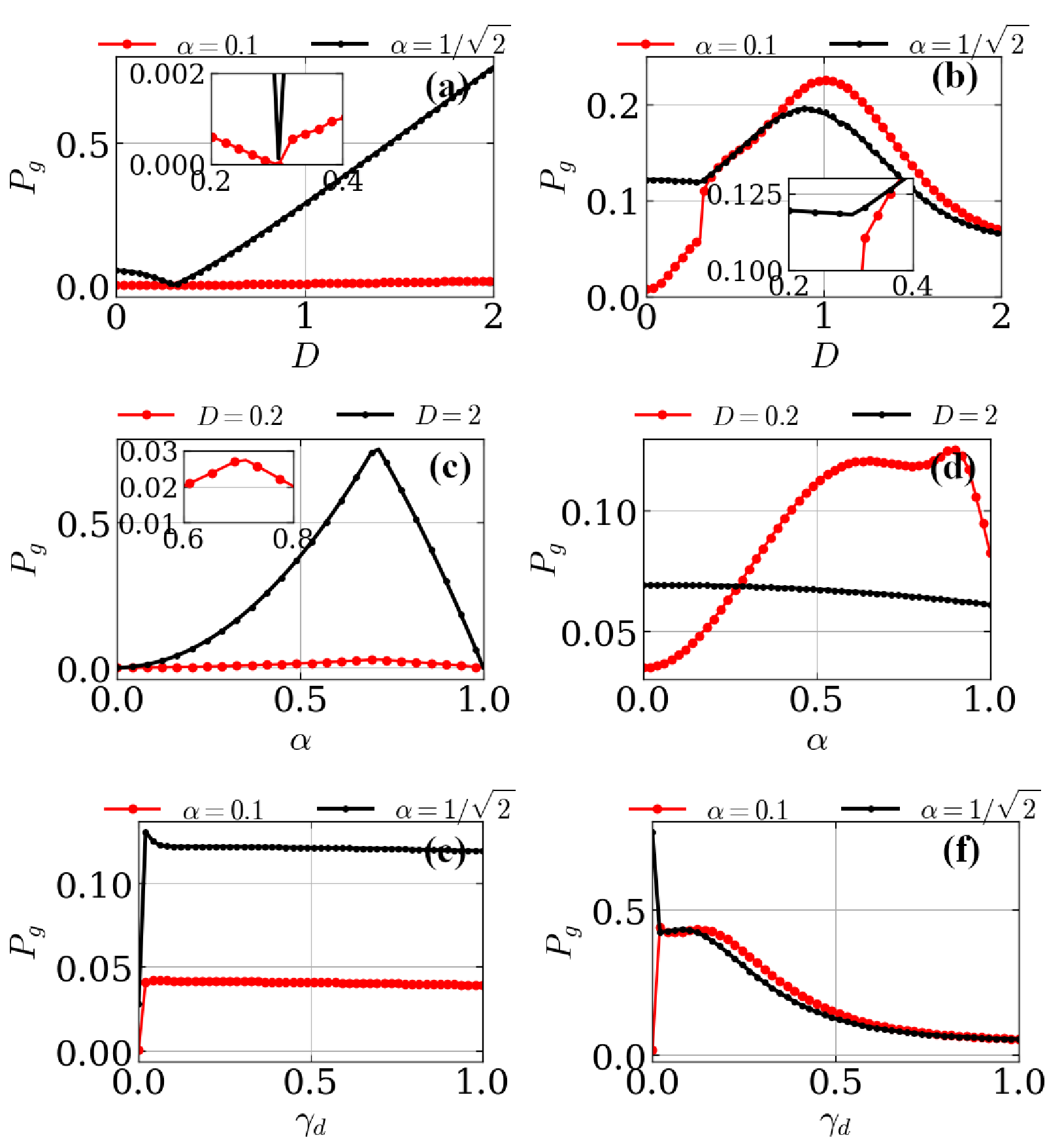}
		\caption{Purity-based gain $P_g$ for the interacting $XYZ$ battery model. $P_g$ as a function of DMI strength $D$ under (a) unitary evolution and (b) collective amplitude damping ($\gamma_d=0.8$), respectively, for $\alpha=0.1$ and $\alpha=1/\sqrt{2}$. Insets highlight the behavior near the critical point $D_c\approx0.3$. Dependence of $P_g$ on the initial-state parameter $\alpha$ for weak ($D=0.2$) and strong ($D=2$) DMI regimes under (c) unitary and (d) dissipative evolution, respectively. The inset illustrates the dependence of alpha for weak DMI near maximal entanglement. Dependence of $P_g$ on the dissipation strength $\gamma_d$ for (e) $D=0.2$ and (f) $D=2$. Other parameters are $h=1$, $J=0.4$, $\gamma=0.2$, $J_z=0.5$, and $\omega_c=5$.}
		\label{f4}
	\end{figure}
	
	Throughout this section, we focus on the ferromagnetic regime along the $z$ axis ($J_z>0$), where the competition between symmetric exchange and DMI determines the ordering of energy levels. Comparing the energies $e_2$ and $e_4$ yields the condition
	\begin{equation}
		\sqrt{D^2 + J^2}
		>
		\sqrt{h^2 + J^2 \gamma^2} - J_z,
		\label{12e}
	\end{equation}
	which defines a transition in the ground-state structure.
	
	The corresponding critical DMI strength, denoted $D_c$ \cite{vigneshwar2026noise}, separates distinct spectral regimes:
	\begin{itemize}
		\item $D<D_c$: symmetric exchange-dominated regime,
		\item $D=D_c$: ground-state degeneracy,
		\item $D>D_c$: DMI-dominated regime.
	\end{itemize}
	
	We now analyze the purity-based gain $P_g$ using the same charging protocol introduced in Sec.~\ref{sec3}, with the initial state
	\begin{equation}
		\ket{\psi_{SA}(0)}
		=
		\alpha \ket{\psi_0}\ket{0}
		+
		\beta \ket{\psi_1}\ket{1},
	\end{equation}
	where $\ket{\psi_0}$ and $\ket{\psi_1}$ now denote the ground and first excited eigenstates of the interacting Hamiltonian in Eq.~\eqref{2e1}. A relatively strong charging field with $\omega_c=5$ is employed to ensure that the enlarged spectral gap induced by DMI can be effectively accessed during the dynamics. The system evolves under the master equation up to $t=100$, after which the purity-based gain is evaluated.
	
	To investigate the role of spectral restructuring, we fix the interaction parameters as $h=1$, $J=0.4$, $\gamma=0.2$, and $J_z=0.5$, yielding a critical DMI strength of $D_c\approx0.3$. For the unitary case, the gain can be obtained analytically and depends explicitly on both the initial correlations and the energy spacing between the ground and first excited states [see Appendix~\ref{appa}]. Since the identities of $\ket{\psi_0}$ and $\ket{\psi_1}$ change across the transition, the daemonic gain acquires a direct sensitivity to spectral rearrangements absent in the non-interacting system.
	
	This behavior is illustrated in Fig.~\ref{f4}(a). Increasing the DMI strength initially suppresses the gain until the critical point $D_c$, where the lower energy levels become degenerate. At this point, the thermodynamic utility of the correlations is strongly reduced despite the presence of maximal entanglement, as highlighted in the inset of Fig.~\ref{f4}(a). Beyond the transition, the reopening and subsequent enhancement of the energy gap in the DMI-dominated regime increase the available bound energy, leading to a recovery and growth of $P_g$. Thus, the gain is governed not only by the magnitude of correlations but also by how efficiently the spectral structure supports the storage of bound energy.
	
	Under collective amplitude damping, shown in Fig.~\ref{f4}(b), the gain no longer vanishes completely at the transition point. Although the lower levels become degenerate near $D_c$, dissipation redistributes population across higher excited levels and dynamically generates finite bound energy. Nevertheless, the gain exhibits a pronounced nonanalytic variation across the transition due to the abrupt restructuring of the spectrum. The inset in Fig.~\ref{f4}(b) further highlights this sharp change in behavior near the critical point.
	
	The influence of the initial-state parameter $\alpha$ is shown in Figs.~\ref{f4}(c) and \ref{f4}(d). Under unitary evolution, the gain reaches its maximum near $\alpha=1/\sqrt{2}$ irrespective of the DMI strength, although the overall magnitude depends strongly on the spectral gap [see Fig.~\ref{f4}(c)]. In contrast, under dissipative evolution, the dependence on $\alpha$ becomes highly regime dependent. In the weak-DMI regime ($D=0.2$), intermediate values of $\alpha$ maximize the gain, indicating that both population imbalance and initial correlations significantly influence the dissipative conversion of bound energy into work. However, in the strong-DMI regime ($D=2$), the gain becomes nearly insensitive to $\alpha$, as shown in Fig.~\ref{f4}(d). In this regime, the enlarged spectral gap dominates the thermodynamic behavior, reducing the relative importance of the initial-state configuration.
	
	A similar distinction emerges in the dependence on the dissipation strength $\gamma_d$. In the weak-DMI regime, $P_g$ remains nearly constant with increasing $\gamma_d$ [see Fig.~\ref{f4}(e)], indicating that moderate dissipation does not substantially alter the accessible bound energy when the spectral gap is small. By contrast, in the DMI-dominated regime, increasing $\gamma_d$ suppresses the gain [see Fig.~\ref{f4}(f)] due to enhanced irreversible information loss associated with the larger DMI-dependent level spacing. Notably, the gain curves for $\alpha=0.1$ and $\alpha=1/\sqrt{2}$ nearly coincide at large $D$, reinforcing that the dominant contribution to the gain in this regime arises from the spectral structure rather than the initial correlations.
	
	These results demonstrate that daemonic gain in interacting quantum batteries is fundamentally shaped by the underlying spectral organization of the system. Level crossings and gap closings suppress the thermodynamic utility of correlations by reducing the accessible bound energy, whereas interaction-induced gap enhancement can amplify the ancilla-assisted advantage. Importantly, the gain exhibits clear signatures of spectral transitions even in the presence of dissipation, indicating that ancilla-assisted work extraction is sensitive not only to correlations but also to the restructuring of the Hamiltonian spectrum itself. These observations establish the purity-based gain as an effective probe of interaction-driven spectral phenomena and provide a direct thermodynamic interpretation of how level spacing controls the operational usefulness of correlations in ancilla-assisted quantum batteries.

	\section{Conclusion}\label{con}
	
	In this work, we investigated ancilla-assisted work extraction in quantum batteries from the perspective of bound energy and purity. We showed that the bound energy of the reduced system provides a natural upper bound to the daemonic gain and that this bound is saturated for globally pure system--ancilla states. This establishes a direct thermodynamic interpretation of daemonic protocols, where correlations enable the conversion of otherwise inaccessible bound energy into extractable work.
	Motivated by this connection, we introduced the purity-based quantity $P_g$ as a computationally efficient indicator of daemonic gain. For globally pure states, $P_g$ reproduces the exact gain, while for mixed states it successfully captures the qualitative behavior without requiring explicit optimization over measurements. The close agreement between $\delta W$ and $P_g$ demonstrates that purity plays a central role in determining the operational usefulness of correlations for work extraction.
	
	For non-interacting quantum batteries, we analyzed the influence of collective dissipation and showed that noise can both suppress and generate daemonic advantage through environment-induced correlations, highlighting that dissipation can actively participate in the conversion of bound energy into useful work rather than merely degrading performance.
	We then demonstrated that intrinsic interactions qualitatively reshape the daemonic gain through interaction-driven modifications of the energy spectrum. While ground state degeneracies suppress the accessible bound energy and reduce the thermodynamic utility of correlations, interaction induced gap enhancement amplifies the attainable ancilla-assisted advantage. These effects persist even in the presence of dissipation, indicating that daemonic gain is sensitive to the restructuring of the underlying spectrum.
	
	In summary, our results establish that daemonic gain is governed not only by correlations, but also by the spectral organization of the system through its influence on bound energy. Since the purity-based gain $P_g$ responds sensitively to level crossings and spectral transitions, the present framework can be naturally extended to larger many-body systems to investigate the behavior of daemonic gain near quantum criticality~\cite{mukherjee2021many,murphy2025ergotropy}. More broadly, our results suggest that ancilla-assisted quantum batteries provide a useful platform for correlation-enhanced energy extraction, thermodynamic sensing of spectral transitions, and the design of robust quantum energy-storage architectures in noisy quantum technologies~\cite{campbell2026roadmap}.
	
	\appendix
	
	\section{Analytical evaluation of ergotropy and daemonic gain for unitary evolution}\label{appa}

	In this appendix we present the analytical evaluation of the ergotropy and the corresponding daemonic gain for the non-interacting qubit system considered in Sec.~\ref{sec3}. The system consists of two battery qubits ($S$) and a single auxiliary ancilla qubit ($A$). The initial state is prepared as a correlated superposition between the ground and first excited states of the system Hamiltonian $H_S$,
	\begin{equation}
		|\Psi(0)\rangle
		=
		\alpha |11\rangle_S |0\rangle_A
		+
		\beta |01\rangle_S |1\rangle_A ,
	\end{equation}
	where $\alpha^2+\beta^2=1$. Here $|1\rangle$ and $|0\rangle$ denote the excited and ground states of each qubit, respectively.
	
	The total Hamiltonian governing the charging dynamics is
	\begin{equation}
		H = H_S + H_C .
	\end{equation}
	Since the ancilla does not participate in the charging process, its Hamiltonian can be ignored and effectively acts as the identity operator $\mathbb{I}_A$. Furthermore, the charging field is applied locally and identically to the system qubits, allowing the total Hamiltonian to be decomposed into independent single-qubit contributions.
	
	For each qubit the local Hamiltonian can be written as
	\begin{equation}
		H^{(i)} =\frac{\Omega}{2} \left(\frac{h}{\Omega}\sigma_z + \frac{2 \omega_c}{\Omega}\sigma_x\right),
	\end{equation}
	with
	\begin{equation}
		\Omega = \sqrt{h^2 + 4\omega_c^2}.
	\end{equation}	
	The single-qubit evolution operator is therefore
	\begin{equation}
		U^{(i)}(t)=e^{-iH^{(i)}t}=e^{-i\frac{\Omega t}{2} [\hat{a}.\vec{\sigma}]},
	\end{equation}
	With $\hat{a}=(2\omega_c/\Omega )\hat{x}+(h/\Omega) \hat{z}$ and $\vec{\sigma}$ is the Pauli matrix vector. Because the qubits evolve independently, the global evolution operator acting on the system is
	\begin{equation}
		U(t)=U^{(1)}(t)\otimes U^{(2)}(t).
	\end{equation}
	
	Applying $U(t)$ to the system components of the initial state produces the evolved states
	\begin{align}
		|\phi_0(t)\rangle &= U(t)|11\rangle
		= (U^{(1)}(t)|1\rangle)\otimes(U^{(2)}(t)|1\rangle), \\
		|\phi_1(t)\rangle &= U(t)|01\rangle
		= (U^{(1)}(t)|0\rangle)\otimes(U^{(2)}(t)|1\rangle).
	\end{align}
	The instantaneous energy of the system with respect to the system Hamiltonian
	\begin{equation}
		H_S=\frac{h}{2}\left(\sigma_z^{(1)}+\sigma_z^{(2)}\right)
	\end{equation}
	is given by
	\begin{equation}
		E(t)=\alpha^2\langle\phi_0(t)|H_S|\phi_0(t)\rangle
		+\beta^2\langle\phi_1(t)|H_S|\phi_1(t)\rangle .
	\end{equation}
	The ergotropy is obtained by subtracting the passive energy $E_P$, defined as the minimal energy obtainable through unitary rearrangement of the eigenvalues of the density matrix. This yields
	\begin{equation}
		\xi(t)
		=
		\frac{8\alpha^2 h \omega_c^2}{\Omega^2}
		\sin^2\!\left(\frac{\Omega t}{2}\right)
		-\alpha^2 h
		+
		h\,\max(\alpha^2,\beta^2),
	\end{equation}
	where the $\max$ function arises from ordering the eigenvalues in increasing energy to construct the passive state.
	
	For the unitary case, the daemonic gain can be obtained directly using the result of Theorem~1, according to which globally pure system–ancilla states saturate the bound energy limit. Consequently, the daemonic gain is equal to the bound energy of the system state and evaluates to
	\begin{equation}
		\delta W = h\,\min(\alpha^2,\beta^2).
		\label{appeq}
	\end{equation}
	
	For the interacting Hamiltonian introduced in Eq.~\eqref{2e1}, an analogous argument yields the expression for daemonic gain under unitary dynamics as
	\begin{equation}
		\delta W = |e_2-e_4|\min(\alpha^2,\beta^2),
	\end{equation}
	where $e_2$ and $e_4$ denote the relevant lower energy eigenvalues of the interacting system Hamiltonian. This expression shows that the attainable daemonic gain is directly determined by the spectral separation of the low-energy states and the initial system–ancilla correlations. Substituting the values of the energy levels,
	\begin{equation}
		\delta W = |-\sqrt{J^2 + D^2} - J_z+\sqrt{h^2 + J^2 \gamma^2}|\min(\alpha^2,\beta^2).
	\end{equation}

	\bibliography{c2}
	
\end{document}